\newtheorem{theorem}{Theorem}[section]
\newtheorem{prop}{Proposition}[section]
\theoremstyle{definition}
\theoremstyle{remark}
\theoremstyle{remark}
\newcommand{\etal}{\textit{et al}. }
\newcommand{\ie}{\textit{i}.\textit{e}.}
\newcommand{\eg}{\text{e}.\textit{g}.}
\DeclareMathOperator*{\argmax}{arg\,max}
\DeclareMathOperator*{\argmin}{arg\,min}
\newcommand{\CP}{{\textsf{\footnotesize CP}}}
\newcommand{\ANO}{{\textsf{\footnotesize ANO}}}
\def\NoNumber#1{{\def\alglinenumber##1{}\State #1}\addtocounter{ALG@line}{-1}}
\begin{document}
\title{Cache Subsidies for an Optimal Memory for Bandwidth Tradeoff in the Access Network
}

\author{Mahdieh Ahmadi,
        James Roberts,~
        Emilio Leonardi,~
        and~Ali Movaghar,~\IEEEmembership{Senior~Member~IEEE}
\IEEEcompsocitemizethanks{
\IEEEcompsocthanksitem M. Ahmadi and A. Movaghar are with the Department of Computer Engineering, Sharif University of Technology, Tehran, Iran. (e-mail:mahmadi@ce.sharif.edu, movaghar@sharif.edu)
\IEEEcompsocthanksitem J. Roberts is with the Network and Computer Science Department, Telecom ParisTech, Paris, France. (e-mail:jim.roberts@telecom-paristech.fr)
\IEEEcompsocthanksitem E. Leonardi is with the Politecnico di Torino, 10129 Turin, Italy (e-mail:emilio.leonardi@polito.it)}
}

\IEEEtitleabstractindextext{%
\begin{abstract}
While the cost of the access network could be considerably reduced by the use of caching,  this is not currently happening because content providers (CPs), who alone have the detailed demand data required for optimal content placement, have no natural incentive to use them to minimize access network operator (ANO) expenditure. We argue that ANOs should therefore provide such an incentive in the form of direct subsidies paid to the CPs in proportion to the realized savings. We apply coalition game theory to design the required subsidy framework and propose a distributed algorithm, based on Lagrangian decomposition, allowing ANOs and CPs to collectively realize the optimal memory for bandwidth tradeoff. The considered access network is a cache hierarchy with per-CP central office caches, accessed by all ANOs, at the apex, and per-ANO dedicated bandwidth and storage resources at the lower levels, including wireless base stations, that must be shared by multiple CPs. 
\end{abstract}

\begin{IEEEkeywords}
Network economics, Content distribution, Caching, Cache subsidy 
\end{IEEEkeywords}}

\maketitle
\IEEEdisplaynontitleabstractindextext  
\IEEEpeerreviewmaketitle

\section{Introduction}

Since the advent of the Web and the explosive expansion of the Internet, network operators have recognized the potential for significantly reducing infrastructure costs by locally caching copies of popular contents rather than fetching them repeatedly from a remote source. However, despite the clear and increasing economic advantages of caching, it remains true that content is rarely stored at any site closer to end users than points of presence (PoPs) situated well upstream of the access network.

Network operator caching ambitions were initially stymied by more effective competition from global reach content distribution networks (CDNs) like Akamai. More significantly, content providers (CPs), like Google, Facebook and Netflix who came later and are now the source of most Internet traffic, have developed lucrative business models that rely on exclusive knowledge of their respective customer base. They preserve this knowledge by encrypting transmissions and thus prevent operators from transparently caching their contents. CPs do cache popular content in ISP PoPs both to reduce their costs and to improve end user quality of experience. However, they have hardly any additional incentive to move caches even closer to users within the access network. It is the main thesis of this paper that network operators need to create such an incentive by financially rewarding CPs for content placements that reduce infrastructure costs. The gain from optimizing the memory for bandwidth tradeoff realized by caching is considerable and network operators, CPs and ultimately end-users will all benefit significantly. 

  \begin{figure}[t]
 \centering
 \captionsetup{justification=centering}
 \includegraphics[width=0.48\textwidth]{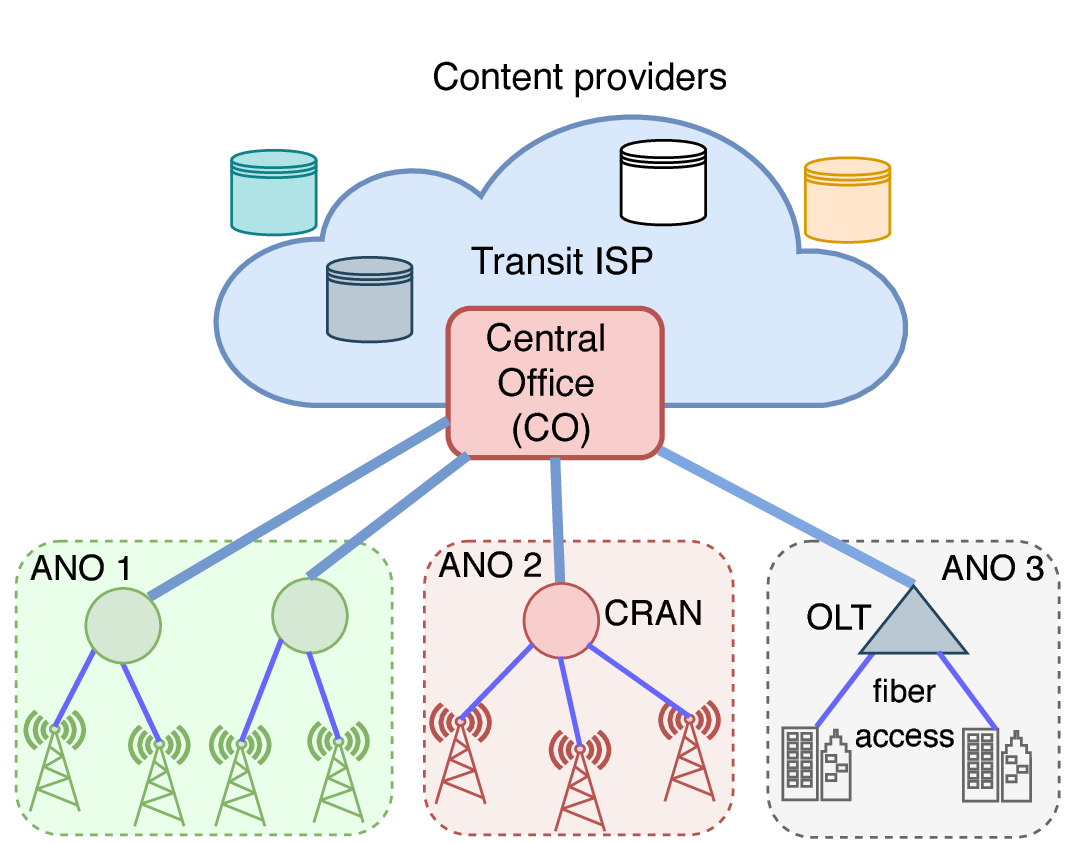}
 \caption{Access network topology}
 \label{fig:network}
 \end{figure}  
 
We consider the simple tree-shaped access network topology depicted in Fig. \ref{fig:network}. The root of the tree is a central office (CO) equipped as a datacenter \cite{peterson2016}. This is a hub connecting multiple access network operators (ANOs) to the Internet via an ISP. An ANO might, for instance, be a mobile network operator with base stations (BSs) at the lowest layer, or a fixed network operator with optical network units (ONUs) shared by users in the same building. These ANOs may have intermediate nodes housing a cloud-radio access network (C-RAN) or an optical line terminal (OLT), respectively. These intermediate nodes are assumed to have (micro-)datacenter capabilities, like the CO,  enabling flexible storage provision while base stations have just a limited capacity cache memory such as a solid-state disk. All ANOs deliver contents from all CPs and demand is assumed independent of content placements. This network model is clearly not completely general but is intended to be sufficiently generic to illustrate the challenges raised by the design and implementation of cache subsidies.

The use of subsidies to drive an optimal memory for bandwidth tradeoff in the access network is, to the authors' knowledge, an original proposal.  The main contributions of the paper are as follows:
\begin{itemize}
\item the potential savings to be gained from an optimal tradeoff are roughly quantified suggesting that they may count in billions of dollars per year for a large network operator;
\item the need for CP subsidies as a catalyst for realizing these savings is identified as a consequence of the powerful position of the CPs in the ANO two-sided market;
\item coalition game theory is applied to determine how the cost of a CP cache in the CO should be shared between the ANOs  using it and how much they contribute to the subsidy;
\item a distributed scheme is proposed to optimize the access network tradeoff where CPs optimally place content based on unit storage and bandwidth prices while these prices are modulated by ANOs as necessary to meet capacity constraints.
\end{itemize}
The following four sections successively present these contributions. We discuss related work in the penultimate section before drawing conclusions and highlighting directions for further study.
\section{Memory for bandwidth tradeoff}\label{sec:mbt}
This section discusses the memory bandwidth tradeoff for an access network taking the form of a tree, assuming known demand and given storage and bandwidth costs.

 \subsection{Cache performance}\label{sec:mbt:cp}
We consider a catalogue of contents $\mathcal{F}$ with a large number $F$ of items (`files') that, to simplify but without loss of generality, we assume to be of constant size. 
To estimate cache hit probabilities, we assume requests for each content $f$ 
generate traffic at rate $\lambda^f$ content downloads per second, proportional to a popularity law $q^f$, \ie\, $\lambda^f = T q^f$ where $\sum_{f\in \mathcal{F}}{q^f}=1$, $q^1\ge q^2 \ge \cdots \ge q^F$ and $T$ is the overall traffic demand in downloads per second. $T$ here is the peak traffic used for network dimensioning.

For the present work we assume the performance of a cache of size $C$ is ideal, yielding a hit probability $h(C)=\sum_{f \le C} q^f$. This would be realized by proactive placement of the most popular items. Note that reactive caching policies like \emph{least recently used} (LRU) are not efficient in the access network where demand is low relative to the rate of content churn \cite{elayoubi2015performance}.

Performance depends significantly on the popularity law. Observations consistently show that request rates for the most popular items roughly follow a Zipf law, $q^f \propto f^{-\alpha}$, with $\alpha~\approx~0.8$ (\eg, \cite{imbrenda2014analyzing}). There is however a very large amount of content in the tail of the distribution whose popularity is imperfectly estimated but is much smaller than would be predicted by the Zipf law \cite{olmos2016, roberts2013exploring}. This high volume content, like old photos posted on Facebook or family videos shared on YouTube, is stored in CP datacenters but is hardly ever cached. For the evaluations in this paper, we generally assume the catalogue $\mathcal{F}$ does follow a Zipf law with a \emph{useful} size $F$ corresponding to a volume of data between 1 TB and 1 PB, it being implicitly assumed that this only represents the most popular contents.

\subsection{Costs of storage and bandwidth}\label{sec:mbt:cost}
Caching trades off the cost of storage for the cost of bandwidth. While the cost of storage is well-known, it proves difficult to obtain accurate estimates of the bandwidth cost. This should be known to the network providers realizing the tradeoff but for the present paper we are obliged to make simplifying assumptions.

The cost of bandwidth is the cost of all transport and routing equipment dimensioned to carry busy period traffic between given points in the network with adequate quality of service. We assume this cost is proportional to peak demand. Costs clearly depend on the particular network instance in question but for the present discussion we consider a fixed cost per unit of busy hour demand. This might, for instance, be the long run average incremental cost (LRIC) used by telecommunications regulators.

A bandwidth cost used in prior work was derived from the marginal cost of backhaul charged to mobile network operators in France at \$2 per Mb/s per month \cite{elayoubi2015performance}. An alternative estimate derives from the cost of bandwidth charged by Cloud providers like Google \cite{googleCloud}. Catalogue GB per month download prices convert to a monthly rate of around \$4 per Mb/s of busy hour traffic\footnote{Assuming busy hour demand is $1/8^{th}$ of daily traffic.}. 

A rough cost of storage can be deduced from the rates charged by Cloud providers. A monthly rate of around \$0.03 per GB is the current offer for frequent access storage \cite{googleCloud}. 
\begin{table}[t]
  \caption{Frequently Used Notation.}
  \label{table:notation}
  \centering
  \begin{tabular}{l p{\ifCLASSOPTIONtwocolumn 0.7 \columnwidth \else 0.45 \columnwidth \fi} }
    \hline 
    $k \in {\mathcal K}$ & CP index\\ 
    $a \in {\mathcal A}$ & ANO index\\ 
    $f \in \mathcal F$ & file index\\ 
    $n \in \mathcal N$& cache node index \\ 
    $l \in \mathcal L$& leaf index \\ 
     $i \in \mathcal I$& intermediate node index \\ 
     $\mathcal F_k$ &catalog of CP $k$ with size $F_k=|\mathcal F_k|$\\ 
     $r_{ak}$ &  CP $k$ share of ANO $a$ cost saving\\ 
     $\lambda_{n}^f$ & demand for content $f$ at node $n$\\ 
    $T_a$ &  traffic demand from ANO $a$ users\\ 
    $q_{a}^f$ & popularity law of ANO $a$ demand over $\mathcal F$ \\ 
    $C_{nk}$ & cache size of CP $k$ at node $n$\\ 
     $\Lambda_{nk}$ & residual traffic of CP $k$ routed over link $n$\\ 
    $S_n$ & cache size limit at node $n$\\ 
    $B_n$ &  bandwidth limit on link $n$\\ 
    $s_n$& cost of storage at node $n$ \\ 
    $b_n$& cost of bandwidth between node $n$ and its parent\\ 
    $h(C)$&hit probability of cache of size $C$ \\ \hline
   \end{tabular}
 \end{table}
 
\subsection{Realizing the optimal tradeoff}\label{sec:ufl}
We model an access network bringing download traffic to users as a tree network with facility for cache storage at each of its nodes (Fig. \ref{fig:3tier}). The network has a set of $\mathcal N$ nodes with root at node  $0$, the CO. Caches at the lowest tier (\eg, base stations) constitute the set of leaf nodes $\mathcal{L}$. 
Demand for content $f$ from leaf node $l$ is $\lambda_l^f$ downloads/s. 
The unit cost of bandwidth between source and root is $b_{0}$ and the cost between any other node $n$ and its parent is $b_{n}$. 
Unit storage cost at node $n$ is $s_n$.

The placement which minimizes the total storage and bandwidth cost can be determined independently for each content. To place each content  $f$ we must solve an uncapacitated facility location (UFL) problem on a tree. The UFL solution specifies the optimal set of nodes $\mathcal{S}$ which minimizes the overall cost for the given content: 
\begin{equation}\label{eq:ufl}
\mathcal{S} = \argmin_{\mathcal S \subseteq \mathcal N} \bigg(\sum_{n \in \mathcal{S}} s_n + \sum_{l \in \mathcal{L}} \lambda_l^f \min_{n \in \mathcal{S}} \sum_{l \preceq m \prec n} b_m\bigg),
\end{equation}
where the summation range denoted ${l \preceq m \prec n}$ covers the set of links in the unique path from $l$ to $n$.

The literature provides algorithms to solve this problem, \eg \,Cornuejols \etal \cite{cornuejols1983uncapicitated} has an $ O(N^2)$ algorithm while Shah \etal \cite{shah2002undiscretized} improves this to $ O(N \log N)$. The UFL algorithm should be applied successively for contents in decreasing order of overall demand, $\sum_{l \in \mathcal{L}} \lambda_{l}^f$. Application can cease when placement of at least one locally more popular content has previously been refused at every node. 
 \begin{figure}[t]
 \centering
 \hspace{-10mm}
 \includegraphics[width=0.5\textwidth]{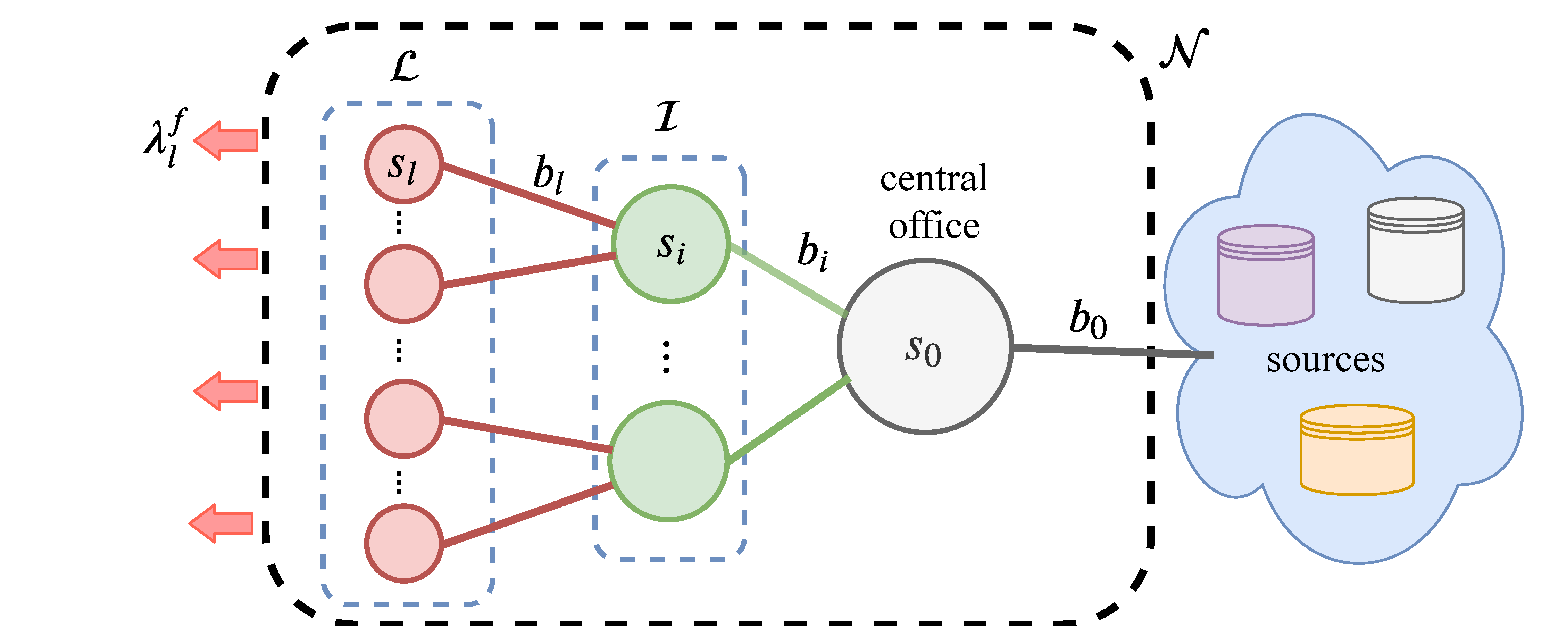}
 \caption{Three tier access network with nodes $\mathcal{N}$, including root $0$, leaves $\mathcal{L}$ and intermediate nodes $\mathcal{I}$ with storage cost $s_n$ and bandwidth cost $b_n$.}
  \label{fig:3tier}
 \end{figure}  
 \subsection{Quantifying the tradeoff}\label{sec:quantify}
 While it is in general necessary to perform optimal placement using an UFL algorithm, to gain intuition we adopt a simpler network model in this subsection. 
 We consider a symmetric 3-tier tree where demand at each leaf is statistically identical and total demand is $T$ over all the leaves. The root at tier $3$ has $e_2$ tier $2$ children each of which has $e_1$ tier $1$ children (the leaves). 

As catalogues and cache sizes are large, it is reasonable for the present evaluation to consider $F$ and $C$ as real variables and to reason in terms of bytes rather than discrete contents \cite{elayoubi2015performance}. The hit probability for a Zipf$(\alpha)$ popularity law is approximated by $h(C)=(C/F)^{1-\alpha}$, derived on replacing summations $\sum_{f\le C} q^f$ by an integral. Monthly storage costs are $s^{(i)}$ at tier $i$ nodes, \ie, a cache of size $C$ costs $Cs^{(i)}$ per month. Bandwidth cost to bring content from the next highest tier to tier $i$ (or from source to root when $i=0$) is denoted $b^{(i)}$, \ie, the cost to transmit demand $T$ is $Tb^{(i)}$ per month. 

Assuming storage is possible at each node, the minimum cost network is such that tier $1$ nodes cache the $C_1$ most popular bytes of content, tier $2$ nodes cache the next $C_2$ most popular while the root caches the next $C_3$, for some values of $C_1$, $C_2$ and $C_3$. Network cost is then,   
\begin{multline*}
\mathsf{cost} =  e_1e_2C_1s^{(1)}+e_2C_2s^{(2)} +C_3s^{(3)}  +T \left( \left(1-h(C_1)\right) b^{(1)} +\right. \\ \left. + \left(1-h(C_1+C_2)\right) b^{(2)}+\left(1-h(C_1+C_2+C_3)\right) b^{(3)} \right).
\end{multline*}
For practically relevant parameter values, standard methods yield optimal cache sizes:
\begin{align}\label{eq:cdependent}
C_1 &= F  \times \min\bigg\{1, \bigg(\frac{(1-\alpha)Tb^{(1)}}{F e_2(e_1s^{(1)}-s^{(2)})}\bigg)^{1/\alpha}\bigg\},\notag\\
C_1+C_2 &= F  \times  \min\bigg\{1, \bigg(\frac{(1-\alpha)Tb^{(2)}}{F(e_2s^{(2)}-s^{(3)})}\bigg)^{1/\alpha}\bigg\},\notag\\
C_1+C_2+C_3 &=  F  \times \min\bigg\{1,  \bigg(\frac{(1-\alpha)Tb^{(3)}}{Fs^{(3)}}\bigg)^{1/\alpha}\bigg\}.
\end{align}

Optimal cache sizes when storage is not possible at one or two tiers can be similarly derived.  Fig.~\ref{fig:compall} plots comparative gains for a network with parameter values, given in the caption, that are meant to be representative of a mobile access network \cite{elayoubi2015performance}.  Savings are plotted against the cost factor $\Gamma=(Tb^{(i)})/(Fs^{(i)})$, that here summarizes the impact of the individual parameters. The figure compares savings possible with caches at all tiers, caches at tiers $1$ and $3$, caches at tiers $1$ and $2$ and caches at tier $1$ only.
  \begin{figure}[t]
  	\centering
        \subfloat[$e_1=100, e_2=10$]{
       \includegraphics[width=0.48\columnwidth]{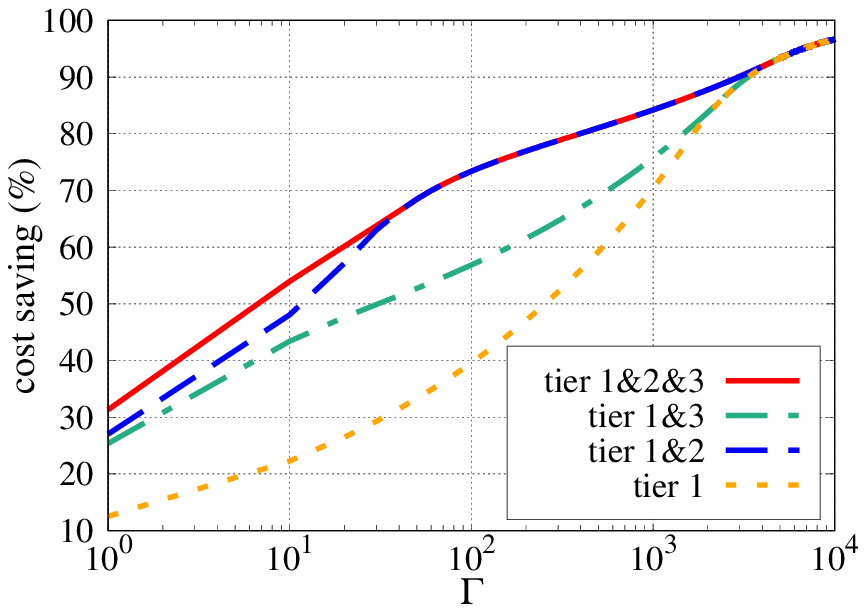}
         \label{fig:comp1}}\hspace{-0.12in}
          \subfloat[$e_1=10, e_2=100$]{
         \includegraphics[width=0.48\columnwidth]{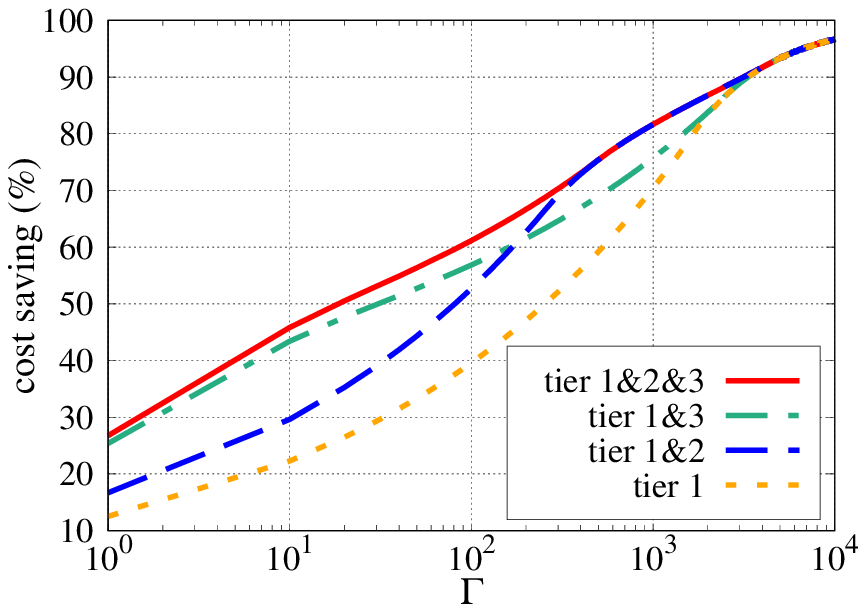}
         \label{fig:comp2}}
        \caption{Access network cost savings against cost factor $\Gamma=(Tb^{(i)})/(Fs^{(i)})$; \ifCLASSOPTIONonecolumn \\ \fi $b^{(1)}=b^{(2)}=b^{(3)}=\$4$ per Mb/s per mth, $s^{(1)}=s^{(2)}=s^{(3)}=\$0.03$ per GB per mth, Zipf($0.8$) popularities.} 
        \label{fig:compall}
    \end{figure}

A first remark is that the savings are potentially very large, especially for high $\Gamma$. For instance, a total demand volume of $10$ Gb/s from 1000 BSs for a CP catalogue volume of 10 TB corresponds here to $\Gamma= 133$ and an optimal saving of more than $70\%$ for the data of Fig. \ref{fig:comp1}. Data gathered by Perillo \etal \cite{perillo2016panel} provide a reality check on the size of the savings possible through an optimal tradeoff: the annual revenue of network operators is measured in tens of billions of dollars (\eg, \$$130$B for AT\&T in 2013) and CapEx is roughly $15\%$ of this. Since most expenditure is for the access network, we estimate that potential annual savings are measured in billions of dollars.

Fig. \ref{fig:comp1} shows that most savings in this example come from caches at the two lower tiers, 1 and 2, while tier 1 BS caches alone are generally inadequate. When fanout values are inversed in Fig. \ref{fig:comp2}, however, a cache at tier $3$ (the CO) brings greater savings than a cache at tier $2$ for $\Gamma < 10^2$. In general, caches at all three layers significantly contribute to the overall savings.

\section{The content delivery business }\label{sec:CDN}
We discuss the business environment arguing ANOs must subsidize CPs in order to realize the optimal memory for bandwidth tradeoff.   
\subsection{Content providers}
Major content providers like Netflix, YouTube and Facebook currently generate the vast majority of Internet traffic \cite{sandvine}. These CPs represent multi-billion dollar businesses gaining considerably more revenue and having greater market power than the network operators on which they rely for content delivery. We include CDNs like Akamai among the CPs as they also manage large volumes of traffic and have similar business relations with the ANOs. 

The CP business model generally relies on exclusive and deep knowledge of customer behavior, used for recommendation systems, ad placement and other strategic marketing activities. Customer behavior is also rightly considered to be highly confidential and for this reason alone would not be shared with any network operator seeking to realize the memory bandwidth tradeoff.  The significant gains highlighted in the previous section must be realized therefore by exploiting the CPs extensive knowledge of how customer demand is distributed over its content catalogue while preserving the exclusivity of this knowledge. 

\subsection{ A two-sided market}\label{sec:CDN:2sided}
An Internet access network, viewed as a platform, is a two-sided market where end-users constitute the ``money side'' and CPs the ``subsidy side'': end-users consume content and pay the ANO for connectivity while CPs supply the content and typically do not pay anything to the ANO for the traffic this generates \cite{rochet2006two}. The absence of compensation from the CPs is frequently the source of ANO complaints and has generated heated discussions about network neutrality  (e.g., Comcast versus Netflix in the US, Free versus Google in France). To incite compensation, the ANOs might be tempted to provide less than adequate bandwidth to carry generated traffic by non-paying CPs but the impact on quality  mainly hurts their paying end-user customers. CPs typically do pay the ISPs to which their servers connect but these are usually distinct from the considered ANOs. Proposals for sharing the revenue between ``CP ISPs'' and ``eyeball ISPs'' have, to our knowledge, never been implemented \cite{ma2010cooperative}.

Of course, CPs do employ caching, typically placing dedicated servers in one or a small number of PoPs in the ISP network upstream of the CO and the access networks considered here \cite{bottger2018open, yap2017taking, wohlfart2018leveraging}. This is advantageous both to reduce the load of their datacenters and to improve customer QoE through lower latency. The propagation time is typically much smaller from the PoP than from the remote origin datacenter and has a significant impact on latency which is in turn very important for customer satisfaction. On the other hand, the small additional reduction in propagation time due to moving the cache from the PoP to any of the access network cache locations considered here has a negligible impact on perceived latency and QoE.  We believe enhanced QoE and its supposed impact on market share is therefore unlikely to motivate CP cooperation in realizing the optimal memory for bandwidth tradeoff.

\subsection{Subsidized content placement}\label{sec:subsidy}

We claim ANOs need to persuade CPs to optimally place content in access network caches by sharing the resulting gain with them in the form of a direct subsidy. Consider, for illustration, an isolated cache dedicated to a given CP. The CP places content set $\mathcal{C}$ in the cache at total cost $C\times s$ where $C=|\mathcal{C}|$. End-users generate download traffic $T$ of which a proportion $h(\mathcal{C})$ is served by the cache. Bandwidth has unit cost $b$ so that the cache brings a net saving,
\begin{equation}
E(\mathcal{C}) = Th(\mathcal{C})\cdot b - C\cdot s. \label{eq:savings:1}
\end{equation}
The CP will have an incentive to cache the contents that maximize these savings if it receives a subsidy that is proportional to $E(\mathcal{C})$.  Both ANO and CP gain with this proposal. The remainder of this paper is about how such a subsidy might be realized in practice for the hierarchical access network discussed in Sec. \ref{sec:ufl}. We first introduce the network model. 

\subsection{Network model} \label{sec:netmodel}
The considered network topology is as depicted in Fig. \ref{fig:network}. It is a tree rooted on the central office, CO. The CO cache is shared by all the ANO instances connected to it while caches in downstream nodes, including base stations, are dedicated to their particular ANO instance. This topology is based on European Internet access. It is not perfectly general but usefully illustrates the main issues to be resolved. Extensions would be necessary, for instance, to account for ANO infrastructure sharing \cite{andrews2014}, or to more efficiently use base station caches when coverage areas overlap \cite{krolikowski2018decomposition}.

Multi-access edge computing (MEC) would be realized by the CO, intermediate nodes (CRANs or OLTs, say) or leaf nodes (base stations or ONUs), depending in particular on the latency requirements of the application in question. However, for \emph{content delivery}, the latency from a cache in any location would be sufficiently small and is not a placement criterion. 

We suppose the CO and transit ISP are owned by entities distinct from the ANOs, if necessary by imposed unbundling regulations. The former sell storage capacity and bandwidth at fixed unit rates and capacity is assumed unlimited. Other resources are assumed to belong to the specific ANO and their usage can be controlled by adjusting the ``prices'' $b$ and $s$ used to compute the CP subsidy. 

\section{Sharing a central office cache}\label{sec:shareCO}
We apply game theory to analyse value sharing between a CP and a set of ANOs who contribute to the cost of the CP cache at the central office. 

\subsection{A coalition game} \label{sec:game}
In our network model in Sec. \ref{sec:netmodel}, the CO hub is common to multiple ANOs. It provides cache space to CPs and is a gateway for IP transit. We analyse the memory for bandwidth tradeoff at the CO assuming ANOs are charged a common rate for transit bandwidth equivalent to $b$ per unit of peak traffic, expressed here in content downloads per second, and the CO storage charge is set to $s$ per content.  Available bandwidth and storage is  unlimited and the values of $b$ and $s$ are fixed. 

The cache used by each CP is distinct from that used by any other CP. This is required to preserve the privacy of customers and to ensure the integrity of the CP business models. We therefore consider a single generic CP providing content to multiple ANOs who must share the burden of the CP subsidy (cf. Sec. \ref{sec:subsidy}).   

To determine how savings due to caching should be shared, we apply  techniques from coalition game theory. 
The outcome of game $G(\mathcal{N},v)$ between players $n\in \mathcal{N}$ is determined by the value function $v(\mathcal{S})$ defined as the collective gain realizable by any sub-coalition $\mathcal{S} \subseteq \mathcal{N}$. The objective is to specify how the value should be distributed between players to maximize the gain while satisfying certain properties.  The value distribution, $\Phi:G \rightarrow \mathbb{R}^{|\mathcal N|}$, defines the amount allocated to each player. The following are two important concepts from coalition game theory \cite{saad2009coalitional}:
\begin{itemize}
\item the \emph{core} of game $G(\mathcal{N},v)$ is the set of all distributions $\Phi_n$, for $n \in \mathcal{N}$, such that no group of players would gain more by leaving to form a sub-coalition $\mathcal{S} \subset \mathcal{N}$,
\item the \emph{Shapley value} is the unique value distribution that satisfies four properties known as \emph{efficiency}, \emph{symmetry}, \emph{linearity} and \emph{null player}.
\end{itemize}

Denote the generic CP by $\CP$ and the set of ANOs by $\mathcal{A}$. We consider games over players $\{\CP \cup \mathcal{A}\}$ with value function such that $v(\mathcal{S})=0$ for any $\mathcal{S} \subseteq \mathcal{A}$ (i.e., any sub-coalition excluding $\CP$) and 
$$v(\CP \cup \mathcal{S}) = \max_{\mathcal{C}\subseteq \mathcal{F}} E(\mathcal{C}, \mathcal{S}),$$
where $E(\mathcal{C}, \mathcal{S})$ is the overall saving (cf. Sec.  \ref{sec:subsidy}),
\begin{equation}
E(\mathcal{C},\mathcal{S})=\sum_{a \in \mathcal{S}} (T_a h_a(\mathcal{C})\cdot b) - C \cdot s, \label{eq:savings}
\end{equation}
with $h_a(\mathcal{C})=\sum_{f\in \mathcal{C}} q_a^f$.

Let $\mathcal{C}^*(\mathcal{S})$ denote the set of contents that maximize savings \eqref{eq:savings} and write $\lambda_a^f$ for $T_aq_a^f$, the residual demand routed to the CO for content $f$ from ANO $a$ users. The following proposition characterizes $\mathcal{C}^*$.
 \begin{prop} \label{prop:percontent}
 \begin{equation} \label{eq:optset}
 \mathcal C^*(\mathcal{S}) = \Big\{f \in {\mathcal F} : \sum_{a \in \mathcal{S}}{\lambda_a^f} > s / b \Big\}.
 \end{equation}
\end{prop}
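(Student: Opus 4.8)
The plan is to exploit the additive structure of the objective $E(\mathcal{C},\mathcal{S})$ over contents, which reduces the combinatorial maximization over subsets $\mathcal{C} \subseteq \mathcal{F}$ to an independent decision for each file $f$. First I would substitute the definition $h_a(\mathcal{C}) = \sum_{f \in \mathcal{C}} q_a^f$ into the savings expression \eqref{eq:savings} and write the cardinality as $C = |\mathcal{C}| = \sum_{f \in \mathcal{C}} 1$. Both the hit-probability term and the storage term are then sums over $f \in \mathcal{C}$, so no cross-content coupling remains.

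The key step is to interchange the order of summation between ANOs $a \in \mathcal{S}$ and contents $f \in \mathcal{C}$, collecting the coefficient of each content. Using $\lambda_a^f = T_a q_a^f$, this rewrites the objective as
\[
E(\mathcal{C},\mathcal{S}) = \sum_{f \in \mathcal{C}} \Big( b \sum_{a \in \mathcal{S}} \lambda_a^f - s \Big),
\]
a sum of per-content marginal values $g(f) := b \sum_{a \in \mathcal{S}} \lambda_a^f - s$ that do not interact across files.

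Since the summand for each $f$ depends only on whether $f$ belongs to $\mathcal{C}$, the separable sum is maximized by including precisely those contents with positive marginal value and excluding the rest. This yields $\mathcal{C}^*(\mathcal{S}) = \{ f : g(f) > 0 \}$, and dividing the threshold $g(f) > 0$ through by $b > 0$ gives $\{ f : \sum_{a \in \mathcal{S}} \lambda_a^f > s/b \}$, matching \eqref{eq:optset}.

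The only subtlety, rather than a genuine obstacle, is the boundary case $g(f)=0$: a content with $\sum_{a \in \mathcal{S}} \lambda_a^f = s/b$ contributes zero whether cached or not, so the maximizer is not strictly unique. The strict inequality in \eqref{eq:optset} picks out the canonical, minimal-cardinality optimizer, which I would note explicitly; every such maximizer attains the same optimal value $v(\CP \cup \mathcal{S})$.
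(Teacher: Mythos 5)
Your proposal is correct and follows essentially the same route as the paper: rewrite the savings \eqref{eq:savings} as the separable per-content sum \eqref{eq:CF:3} and include exactly the contents with positive marginal value. Your extra remark on the tie case $\sum_{a\in\mathcal{S}}\lambda_a^f = s/b$ is a minor refinement the paper omits.
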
 
\begin{proof}
We can re-write \eqref{eq:savings} as
\begin{equation}\label{eq:CF:3}
E(\mathcal{C},\mathcal{S})= \sum _{f \in \mathcal C} \Big(\sum_{a \in \mathcal{S}} {(\lambda_a^f}\cdot b) - {s} \Big).
\end{equation}
Clearly, $E(\mathcal{C},\mathcal{S})$ is maximized by including in $\mathcal{C}$ all contents for which the term in parentheses is positive. 
\end{proof}

\subsection{Game between CP and ANO}\label{sec:1ANO}

To better understand the relation between CP and ANOs, we first assume the CP cache is dedicated to a single ANO, denoted $\ANO$, and consider the game $G = (\{\CP, \ANO\}, v)$ with value function
$$v(\{\CP, \ANO\}) = E(\mathcal{C}^*,\ANO), $$
where $\mathcal{C}^* =\argmax_{\mathcal{C} \subseteq \mathcal{F}} E(\mathcal{C},\ANO)$, and 
 $v(\mathcal{S})=0\textrm{ for any }\mathcal{S} \subset \{\CP, \ANO\}.$
 It is straightforward to verify that the core of this game consists of shares 
\begin{align*}
\Phi_{\CP} & = r \times E(\mathcal{C^*}, \ANO) \\ 
\Phi_{\ANO} & = (1-r) \times E(\mathcal{C^*}, \ANO), 
\end{align*}
 for $0 \le r \le 1$.  

Any member of the core with $r>0$ has the essential property that $\CP$ will self-interestedly choose to cache the maximizing set of contents $\mathcal{C}^*$. Moreover, any $r\in (0,1)$ brings positive gain to both players and might be considered to define a suitable outcome. 

In considering a similar tradeoff game, Douros \etal  \cite{douros2017caching} propose to use the Shapley value which, in the present case can readily be shown to correspond to equal shares, $r=0.5$. It is not obvious, however, that such a choice would be acceptable to either $\CP$ or $\ANO$. We believe the appropriate value should reflect the relative bargaining power of $\CP$ and $\ANO$ (depending, for instance, on the size of their respective customer bases) and be decided by bilateral negotiation. Such negotiation determines a \emph{weighted} Shapley value, as defined in \cite{kalai1987weighted}.  

We suppose the optimal tradeoff will be realized as follows. $\CP$ makes the cache placement $\mathcal{C^*}$ that maximizes its subsidy $\Phi_{\CP}$ based on its private estimate of  $\{\lambda_{\tiny{ \ANO}}^f\}$. Learning the placement size $C^*$, $\ANO$ pays the CO $C^*s$ for storage and pays the transit ISP $T(1-h(\mathcal C^*))b$ for traffic. Note that the latter payment would be for realized traffic that might differ from the CP estimate. $\ANO$ additionally makes a side payment of $r(Th(\mathcal C^*)b - C^*s)$ to $\CP$. The correctness of payments is verifiable since both $\CP$ and $\ANO$ are aware of cache size $C$ and are able to measure traffics $T$ and $Th(\mathcal{C}^*)$. 

\subsection{Game between multiple ANOs}\label{sec:BM:multi-MNO}

Consider now multiple ANOs that can all download content from the same CP cache. The objective is to determine how the ANOs should share the cost of the cache and realize appropriate side payments to $\CP$.  

Define the game $G(\mathcal{A},v)$ where the value function is $v(\mathcal{S}) = E(\mathcal{C}^*, \mathcal{S})$, given by  \eqref{eq:savings} with $\mathcal{C}^*$ the maximizing set of contents for $\mathcal{S} \subseteq \mathcal{A}$. This game does not explicitly include $\CP$ since we assume its share is determined by the share parameter $r_a$ negotiated independently with each ANO, as discussed in Sec. \ref{sec:1ANO}.
The value distribution over $\mathcal S$, denoted $\{\Phi_a(\mathcal{C}, \mathcal{S})\}$ defines the saving of ANO $a$ for $a\in \mathcal{S}$,  to be shared with $\CP$ so that the overall subsidy is 
\begin{equation} \label{eq:subsidy}
\textsf{sub}(\mathcal{C},\mathcal{S}) = \sum_{a\in \mathcal{S}} r_a \Phi_a(\mathcal{C},\mathcal{S}).
\end{equation}

We claim the distribution $\{\Phi_a(\mathcal{C}, \mathcal{A})\}$ should ideally satisfy the following properties:
\begin{itemize}
\item \emph{coalition incentive}: the distribution should encourage all ANOs to share the same $\CP$ cache; this will happen when the distribution $\{\Phi_a(\mathcal{C}, \mathcal{A})\}$ is in the core of game $G$,  
\item \emph{efficiency}: total distributed value should be equal to the value function, \ie, $\sum_{a\in \mathcal{A}} \Phi_a(\mathcal C, \mathcal A) = E(\mathcal{C},\mathcal{A})$,
\item \emph{optimality}: the $\CP$ subsidy \eqref{eq:subsidy} must be proportional to $E(\mathcal{C},\mathcal{S})$ given by \eqref{eq:savings} to induce it to realize the overall optimal placement $\mathcal{C}^*$,
\item \emph{neutrality}: the $\CP$ placement $\mathcal{C}^*$ should be independent of the respective bargaining power of ANOs as manifested by the $r_a$,
\item \emph{verifiability}: the savings distribution should be readily computable and verifiable by the ANOs and $\CP$.
\end{itemize}
Supposing ANOs should only pay for their own traffic while sharing the cost of storage, we consider distributions of the form,
 \begin{eqnarray}
\Phi_a (\mathcal C, \mathcal{S}) &=&  T_a h_a(\mathcal C) \cdot b -   {\zeta_a (\mathcal{C})  C \cdot s}, \label{eq:distribution1} \\ 
&=& \sum_{f \in \mathcal{C}} \left(\lambda_a^f \cdot b - \eta_a (f) \cdot s \right),\label{eq:distribution2}
 \end{eqnarray}
where $\zeta_a(\mathcal{C})$ and $\eta_a(f)$ are fractional shares, to be defined, and such that $\sum_{f \in \mathcal{C}} \eta_a(f) =\zeta_a(\mathcal{C}) C$ and $\sum_{a \in \mathcal{S}} \zeta_a(\mathcal{C}) = 1$. The latter condition is required for efficiency.

Theorem \ref{th:eta} defines the unique distribution of this form that is valid for general traffic and arbitrary subsidy fractions. 

\begin{theorem} \label{th:eta}
Under general traffic $\{\lambda_a^f\}$ and   arbitrary subsidy  fractions $\{r_a\}$,
the distribution $\{\Phi_a(\mathcal C, \mathcal S)\}$ in \eqref{eq:distribution2} is optimal and neutral over coalition $\mathcal S \subseteq \mathcal A$, if and only if, 
\begin{equation} \label{eq:eta}
\eta_a (f) = \frac{\lambda_a^f} {\sum_{n \in \mathcal{S}} \lambda_n^f}, \forall a \in \mathcal S. 
\end{equation}
Moreover, the distribution $\Phi_a(\mathcal C, \mathcal A)$ is in the core of $G(\mathcal{A},v)$.
\end{theorem}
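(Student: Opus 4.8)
The plan is to exploit the fact that every quantity in play is additive over contents, so that the CP's optimization, the value function, and each ANO's share all split into independent per-content terms. Writing $w_a(f)=\lambda_a^f b-\eta_a(f)s$ for ANO $a$'s net per-content contribution in \eqref{eq:distribution2}, and recalling from \eqref{eq:optset} that $f\in\mathcal C^*(\mathcal S)$ exactly when $E_f:=b\sum_{n\in\mathcal S}\lambda_n^f-s>0$, the whole statement reduces to controlling the signs of the $w_a(f)$ relative to $E_f$. One identity is used throughout: because the assumed form carries the efficiency constraint $\sum_{a\in\mathcal S}\eta_a(f)=1$, we have $\sum_{a\in\mathcal S}w_a(f)=b\sum_{n\in\mathcal S}\lambda_n^f-s=E_f$. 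I would prove the two implications of the ``if and only if'' separately and then handle the core membership as a third, independent step.

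For sufficiency I would substitute $\eta_a(f)=\lambda_a^f/\sum_{n\in\mathcal S}\lambda_n^f$ and observe the key identity $w_a(f)=\eta_a(f)\,E_f$, i.e. each ANO's per-content value is the same nonnegative fraction $\eta_a(f)\in[0,1]$ of the collective value $E_f$. Hence the subsidy \eqref{eq:subsidy} becomes $\textsf{sub}(\mathcal C,\mathcal S)=\sum_{f\in\mathcal C}\big(\sum_{a\in\mathcal S}r_a\eta_a(f)\big)E_f$, a content-wise sum whose coefficients $\sum_{a}r_a\eta_a(f)$ are nonnegative for every $\{r_a\}$. The CP, maximizing the subsidy over $\mathcal C$, therefore includes exactly the contents with $E_f>0$, which is $\mathcal C^*$ regardless of the $r_a$; this is precisely optimality together with neutrality, while efficiency is immediate from $\sum_{a\in\mathcal S}\eta_a(f)=1$.

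For necessity I would argue from neutrality, which must hold for every admissible $\{r_a\}$. Consider a traffic configuration in which a content is exactly marginal for the coalition, $\sum_{n\in\mathcal S}\lambda_n^f=s/b$, so that $E_f=0$ and, by the identity above, $\sum_{a\in\mathcal S}w_a(f)=0$. The CP's per-content subsidy coefficient is the linear form $\sum_{a}r_a w_a(f)$; if the $w_a(f)$ were not all zero, then since they sum to zero they cannot all share one sign, and the linear form would take both signs as $\{r_a\}$ ranges over its admissible set, so the chosen set would depend on $\{r_a\}$, contradicting neutrality. Thus $w_a(f)=0$ for every $a$, and solving $\lambda_a^f b=\eta_a(f)s$ at the threshold gives $\eta_a(f)=\lambda_a^f/\sum_{n\in\mathcal S}\lambda_n^f$. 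Since a fractional share depends only on the relative traffic and the marginal configuration was otherwise arbitrary, this identifies $\eta_a(f)$ for all traffic. I expect this last step to be the main obstacle: the neutrality/optimality condition is a priori only a qualitative sign-alignment, and promoting it to the stated equality is exactly what forces essential use of the ``general traffic'' hypothesis (equivalently, the scale-freeness of the share) rather than a single fixed demand vector.

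Finally, for the core I would verify the two defining conditions for $\{\Phi_a(\mathcal C^*,\mathcal A)\}$ at the realized placement $\mathcal C^*=\mathcal C^*(\mathcal A)$. Efficiency, $\sum_{a\in\mathcal A}\Phi_a=E(\mathcal C^*,\mathcal A)=v(\mathcal A)$, is immediate from $\sum_a\eta_a(f)=1$. For coalitional rationality I would show $\sum_{a\in\mathcal S}\Phi_a(\mathcal C^*,\mathcal A)\ge v(\mathcal S)=E(\mathcal C^*(\mathcal S),\mathcal S)$ content by content. Writing $\Sigma_{\mathcal S}=\sum_{a\in\mathcal S}\lambda_a^f$ and $\Sigma_{\mathcal A}=\sum_{n\in\mathcal A}\lambda_n^f$, the left side contributes $\Sigma_{\mathcal S}\,(b-s/\Sigma_{\mathcal A})$ on each $f\in\mathcal C^*(\mathcal A)$ and zero otherwise, while the right side contributes $\max(0,\,b\Sigma_{\mathcal S}-s)$. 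Using $\Sigma_{\mathcal S}\le\Sigma_{\mathcal A}$, a short case check (on whether $f$ lies in $\mathcal C^*(\mathcal A)$ and/or $\mathcal C^*(\mathcal S)$) shows the per-content left contribution dominates the right, the crucial inequality being $\Sigma_{\mathcal S}(b-s/\Sigma_{\mathcal A})-(b\Sigma_{\mathcal S}-s)=s(\Sigma_{\mathcal A}-\Sigma_{\mathcal S})/\Sigma_{\mathcal A}\ge0$. Summing over contents yields coalitional rationality, which together with efficiency places the distribution in the core of $G(\mathcal A,v)$.
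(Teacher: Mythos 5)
Your sufficiency and core-membership arguments follow essentially the paper's own route, and your core check is in fact more careful than the paper's: the paper justifies its key inequality only by noting that $\mathcal C^*(\mathcal S)\subseteq\mathcal C^*(\mathcal A)$, whereas you make explicit both reasons the inequality holds --- the denominator grows from $\sum_{n\in\mathcal S}\lambda_n^f$ to $\sum_{n\in\mathcal A}\lambda_n^f$, and the extra contents in $\mathcal C^*(\mathcal A)\setminus\mathcal C^*(\mathcal S)$ contribute nonnegatively because $\sum_{n\in\mathcal A}\lambda_n^f>s/b$ for them. Where you genuinely diverge is necessity. The paper works at arbitrary (non-marginal) traffic: it requires the CP's inclusion threshold \eqref{eq:condition} to coincide with \eqref{eq:optset}, writes this as the equation $\sum_{n\in\mathcal S}\lambda_n^f=\bigl(\sum_{n\in\mathcal S} r_n\lambda_n^f\bigr)/\bigl(\sum_{n\in\mathcal S} r_n\eta_n(f)\bigr)$ holding identically in $\{r_a\}$, and reads off \eqref{eq:eta} by setting $r_a=1$ and $r_n=0$ for $n\ne a$. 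You instead specialize to a content exactly at the threshold $\sum_{n\in\mathcal S}\lambda_n^f=s/b$, force $w_a(f)=0$ there by the zero-sum sign argument, and then extend to all traffic by appealing to scale-freeness of the share. Your route is more candid about the underlying logical point (sign agreement at one fixed traffic vector does not pin down $\eta$, so the ``general traffic'' quantifier is doing real work), but the extension step is the weak link: it needs the unstated hypothesis that $\eta_a(f)$ depends only on the relative demands, and it also leans on the per-content normalization $\sum_{a\in\mathcal S}\eta_a(f)=1$, which the paper imposes only in aggregate ($\sum_{f\in\mathcal C}\eta_a(f)=\zeta_a(\mathcal C)C$ with $\sum_a\zeta_a(\mathcal C)=1$) and which its own necessity derivation never invokes. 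The paper's substitution $r_a=1$ sidesteps both issues and is the cleaner way to extract the formula; your marginal-content argument is nonetheless a useful complementary view of why it is neutrality across all $\{r_a\}$, rather than optimality at a single $\{r_a\}$, that forces uniqueness.
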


\begin{proof}
From \eqref{eq:subsidy} and \eqref{eq:distribution2}, the $\CP$ subsidy is
\begin{equation}\label{eq:sub:eta}
 \textsf{sub}(\mathcal{C},\mathcal{S}) = \sum_{f \in \mathcal{C}} \bigg( \sum_{n \in \mathcal{S}} r_n \lambda_n^f\cdot b - \sum_{n \in \mathcal{S}} r_n \eta_n(f) \cdot s \bigg).
 \end{equation}
This is maximal if and only if $\CP$ chooses to cache contents $f$ such that
\begin{equation}
\frac{ \sum_{n \in \mathcal{S}} r_n \lambda_n^f} {\sum_{n \in \mathcal{S}} r_n \eta_n(f)} > \frac{s}{b}. \label{eq:condition}
\end{equation}
For optimality, this condition must coincide with \eqref{eq:optset} so that,
$$ \sum_{n \in \mathcal{S}} \lambda_n^f = \frac{ \sum_{n \in \mathcal{S}} r_n \lambda_n^f} {\sum_{n \in \mathcal{S}} r_n \eta_n(f)}, \forall f \in \mathcal F .$$
For neutrality, this equation must be an identity with respect to the $\{r_a\}$ yielding expression \eqref{eq:eta} on setting particular values $r_a=1$ and $r_n=0$ for $n \neq a$. 

To prove sufficiency, using \eqref{eq:eta} in \eqref{eq:condition} gives the condition of Proposition \ref{prop:percontent}. The shares are thus optimal. Sharing is also neutral since the set defined by Proposition \ref{prop:percontent} is independent of $\{r_a\}$.
 

To prove $\{\Phi_a(\mathcal C, \mathcal A)\}$ is in the core, we deduce from optimality that
\begin{align*}
 v(\mathcal S) &= E(\mathcal{C}^*,\mathcal{S}) = \sum_{f \in \mathcal C^*(\mathcal S)} \Big(\sum_{a \in \mathcal S}( \lambda_a^f \cdot b) -  s\Big) \\
 & = \sum_{f \in {\mathcal C}^*(\mathcal S)} \sum_{a \in \mathcal S} \Big(\lambda_a^f \cdot b - \frac{\lambda_a^f}{\sum_{n \in \mathcal S} \lambda_n^f} \cdot s \Big) \\
 & \le \sum_{f \in \mathcal C^*(\mathcal A)} \sum_{a \in \mathcal S} \Big(\lambda_a^f \cdot b - \frac{\lambda_a^f}{\sum_{n \in \mathcal A} \lambda_n^f} \cdot s\Big) \\
 &=  \sum_{a \in S}{\Phi_a(\mathcal C^*(\mathcal A), \mathcal A)}, 
\end{align*}
where the last inequality follows from the fact that any content $f$ included in the optimal set $\mathcal C ^*(\mathcal S)$ is necessarily also included in $\mathcal C^* (\mathcal A)$. Thus no sub-coalition has value greater than that of the grand coalition.
\end{proof}

The distribution defined in Theorem \ref{th:eta} is unfortunately not verifiable by the ANOs since $\zeta_a(\mathcal{C})$ depends on the demand distributions $\lambda_a^f$ that are unknown to them. Proposition \ref{prop:zeta} defines a verifiable distribution that is optimal in some special cases.
 
 \begin{prop} \label{prop:zeta}
Over coalition $\mathcal S \subseteq \mathcal A$, the distribution $\{\Phi_a(\mathcal C, \mathcal S)\}$ defined in \eqref{eq:distribution1} is optimal and neutral if 
 \begin{equation}
\zeta_a (\mathcal C) = \frac{T_a h_a (\mathcal{C})} {\sum_{n \in \mathcal{S}} T_n h_n (\mathcal{C})},  \label{eq:zeta}
\end{equation}
and all ANOs have the same popularity distribution, $q_a^f= q^f$ for $a \in \mathcal{S}$, and optimal if \eqref{eq:zeta} holds and all ANOs apply the same subsidy fraction, $r_a=r$ for $a \in \mathcal{S}$.
 \end{prop}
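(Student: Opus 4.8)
The plan is to compute the subsidy $\textsf{sub}(\mathcal{C},\mathcal{S})$ explicitly under the choice \eqref{eq:zeta} and exhibit it as a scalar multiple of $E(\mathcal{C},\mathcal{S})$; both optimality and neutrality will then follow from inspecting that scalar in the two stated cases. Note that, unlike Theorem \ref{th:eta}, this is only a sufficiency (``if'') claim, so no converse is required.

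First I would substitute the storage share \eqref{eq:zeta} into \eqref{eq:distribution1} and feed the result into the subsidy definition \eqref{eq:subsidy}. Abbreviating $H(\mathcal{C}) = \sum_{n \in \mathcal{S}} T_n h_n(\mathcal{C})$, the storage term of $\Phi_a$ becomes $r_a \frac{T_a h_a(\mathcal{C})}{H(\mathcal{C})} C s$, and collecting terms gives
$$\textsf{sub}(\mathcal{C},\mathcal{S}) = \Big(\sum_{a \in \mathcal{S}} r_a T_a h_a(\mathcal{C})\Big)\Big(b - \frac{Cs}{H(\mathcal{C})}\Big).$$
Since $E(\mathcal{C},\mathcal{S}) = H(\mathcal{C})\,b - Cs$ by \eqref{eq:savings}, the second factor is exactly $E(\mathcal{C},\mathcal{S})/H(\mathcal{C})$, so that
$$\textsf{sub}(\mathcal{C},\mathcal{S}) = \frac{\sum_{a \in \mathcal{S}} r_a T_a h_a(\mathcal{C})}{\sum_{n \in \mathcal{S}} T_n h_n(\mathcal{C})}\; E(\mathcal{C},\mathcal{S}).$$
Optimality, as defined before the statement, is precisely the requirement that the leading fraction be a positive constant independent of $\mathcal{C}$: only then does maximizing $\textsf{sub}$ over $\mathcal{C}$ coincide with maximizing $E(\mathcal{C},\mathcal{S})$, i.e.\ with selecting $\mathcal{C}^*$.

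I would then treat the two cases. When $q_a^f = q^f$ for all $a \in \mathcal{S}$, we have $h_a(\mathcal{C}) = h(\mathcal{C})$; the common factor $h(\mathcal{C})$ cancels and the fraction collapses to $\big(\sum_{a} r_a T_a\big)/\big(\sum_{n} T_n\big)$, a constant, giving optimality. Neutrality follows at once because this constant, and hence the maximizing set $\mathcal{C}^*$ it selects, does not depend on the $\{r_a\}$. When instead $r_a = r$ for all $a$, the common $r$ factors out of numerator and denominator and the fraction equals $r$, again $\mathcal{C}$-independent, so optimality holds (neutrality is not claimed here, consistent with the statement). Efficiency is automatic, since \eqref{eq:zeta} yields $\sum_{a \in \mathcal{S}} \zeta_a(\mathcal{C}) = 1$.

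I expect no real obstacle: the only step needing care is the algebraic factorization isolating $E(\mathcal{C},\mathcal{S})$ from the subsidy. The conceptual point to emphasize is that the two hypotheses are exactly the two natural ways of forcing the $\mathcal{C}$-dependent weight to be constant, namely cancelling the per-content hit term $h_a(\mathcal{C})$ (equal popularities) or cancelling the bargaining weight $r_a$ (equal fractions), and that neutrality additionally demands the placement be insensitive to the $\{r_a\}$, which the equal-popularity reduction guarantees but the equal-fraction reduction does not.
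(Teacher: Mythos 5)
Your proof is correct and follows essentially the same route as the paper: substitute \eqref{eq:zeta} into \eqref{eq:distribution1} and \eqref{eq:subsidy} to factor the subsidy as $\textsf{sub}(\mathcal{C},\mathcal{S}) = \frac{\sum_{n} r_n T_n h_n(\mathcal{C})}{\sum_{n} T_n h_n(\mathcal{C})}\, E(\mathcal{C},\mathcal{S})$ (the paper's \eqref{eq:sub:zeta}) and note that the prefactor is constant in $\mathcal{C}$ under either hypothesis. One small wording slip: in the equal-popularity case the constant $\bigl(\sum_a r_a T_a\bigr)/\bigl(\sum_n T_n\bigr)$ \emph{does} depend on the $\{r_a\}$; neutrality holds because the maximizer of a positive constant times $E(\mathcal{C},\mathcal{S})$ is the maximizer of $E(\mathcal{C},\mathcal{S})$ itself, not because the constant is free of the $\{r_a\}$.
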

 \begin{proof}
With $\Phi_a$ given by \eqref{eq:distribution1} and \eqref{eq:zeta}, the $\CP$ subsidy may be written,
\ifCLASSOPTIONonecolumn 
\begin{equation}\label{eq:sub:zeta}
\textsf{sub}(\mathcal{C},\mathcal{S}) = \frac{\sum_{n \in \mathcal{S}} r_n T_n h_n(\mathcal{C})} {\sum_{n \in \mathcal{S}}  T_n h_n(\mathcal{C})} \times 
\bigg(\sum_{a \in \mathcal{S}} T_a h_a(\mathcal{C})\cdot b - C \cdot s \bigg). 
\end{equation}
\else
\begin{multline}\label{eq:sub:zeta}
\textsf{sub}(\mathcal{C},\mathcal{S}) = \frac{\sum_{n \in \mathcal{S}} r_n T_n h_n(\mathcal{C})} {\sum_{n \in \mathcal{S}}  T_n h_n(\mathcal{C})} \times \\
\bigg(\sum_{a \in \mathcal{S}} T_a h_a(\mathcal{C})\cdot b - C \cdot s \bigg). 
\end{multline}
\fi
The term in parenthesis is the overall profit so that $\CP$ will place the optimal set $\mathcal{C}^*$ if the pre-factor is independent of $\mathcal{C}$. This occurs if the popularity distributions are the same for all when $h_n(\mathcal{C})=h(\mathcal{C})$. In this case the distribution is optimal and neutral. The pre-factor is also constant if all the $r_a$ are equal proving optimality. 
\end{proof}


The value distributions determined from Theorem \ref{th:eta} and Proposition \ref{prop:zeta} are different except in the special case where all popularity laws are the same. When the laws are different and fractions $r_{ak}$ are not the same, the value distribution determined from \eqref{eq:zeta} is not optimal. Moreover, the CP is required to maximize the subsidy defined by \eqref{eq:sub:zeta} which is non-trivial, especially in the context of the access network as considered in the next section. The optimal distribution determined from \eqref{eq:eta} is therefore preferable despite the impossibility for ANOs to independently verify their allocation. 

We envisaged applying a distribution based on the Shapley value. One possibility is to derive the Shapley value for a fixed cache size $C$, compute the $\CP$ subsidy based on this, and then suppose $\CP$ places the subsidy maximizing content $\mathcal{C}$. This distribution does not have the optimality property, however: the $\CP$ does not have the correct incentive to optimize the overall memory for bandwidth tradeoff. To compute the Shapley value on supposing content  placement is optimized for each ordered sub-coalition, on the other hand, rapidly becomes computationally intractable (as in \cite{ma2010cooperative}, for instance).

\subsection{Approximate verification}

The value distribution determined from Theorem \ref{th:eta} has all the desirable properties except verifiability. In this section, we numerically explore the possibility of using $\zeta_a$ from \eqref{eq:zeta} in \eqref{eq:distribution1} to perform an approximate verification. In other words, $\CP$ computes the optimal set $\mathcal{C}$ and distribution $\{\Phi_a(\mathcal C, \mathcal A)\}$ by applying Theorem \ref{th:eta} while each ANO estimates the realized shares using \eqref{eq:zeta} and \eqref{eq:distribution1}. There is of course, a difference between  predicted demand (the $\lambda^f_a$) and realized demand (measured $T_ah_a(\mathcal C)$) but we ignore this discrepancy here and consider the error arising when realized demand actually coincides with the forecast.

We consider a CP with  a catalogue of $F=10^7$ megabyte files. The fixed cost of storage at the CO is $s=\$0.03$ per GB (\$$3 \times 10^{-5}$ per content) and the cost of bandwidth is $b=\$ 4$ per Mb/s. Two ANOs share the CP cache and have demand $T_1=160$ Mb/s and $T_2 = 80$ Mb/s. The popularity law for both is Zipf(0.8) but the ranking of contents for ANO $2$ is a random permutation of the ranking for ANO $1$. Numerical results are the average of 10 different random permutations. 

   \begin{figure}[t]
         \centering
         \includegraphics[width=\ifCLASSOPTIONtwocolumn 0.8\columnwidth \else 0.5\columnwidth \fi]{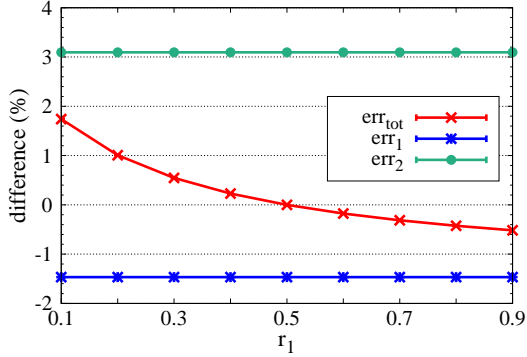}
         \caption{Percentage difference when verifying subsidies computed using Theorem \ref{th:eta} by applying formula \eqref{eq:zeta}; \ifCLASSOPTIONtwocolumn \\ \fi 2 ANOs, $T_1=160$ Mb/s, $T_2=80$ Mb/s, $F=10^7$, uncorrelated Zipf($0.8$) popularities, $r_2=0.5$.}          \label{fig:BM:errors}
            \end{figure}

Fig. \ref{fig:BM:errors} plots the percentage difference between the required subsidy using \eqref{eq:zeta}, $\textsf{sub}_a(\zeta)$, and the subsidy due using \eqref{eq:eta}, $\textsf{sub}_a(\eta)$, as a function of $r_1$ when $r_2=0.5$ is fixed. More precisely, we plot 
\begin{equation*}
\textsf{err}_a = \left(\textsf{sub}_a(\zeta) - \textsf{sub}_a(\eta)\right)/ \textsf{sub}_a(\eta)\times 100
\end{equation*}
against $r_1$, for $a=1$ and $a=2$, and the equivalent relative error, $\textsf{err}_{\textrm{tot}}$, computed for the total subsidies \eqref{eq:sub:zeta} and \eqref{eq:sub:eta}. The $95\%$ confidence interval estimated from the $10$ different ranking permutations is  smaller than the line thickness.

As predicted by Prop. \ref{prop:zeta}, the total subsidy is accurately estimated when $r_1=r_2$. The errors for individual subsidies, however, do not vary with $r_1$ since both the true shares computed via $\eta_a(f)$ and the estimates computed via $\zeta_a(\mathcal C$) do not depend on the $r_a$. Importantly, the error is small despite the extreme disparity between popularity laws. The absolute error  increases with the ratio $T_1/T_2$ but remains relatively small (\eg, $\textsf{err}_{\textrm{2}} < 15\%$ for $T_1/T_2 < 10$). These results suggest it may be possible to design a practical subsidy scheme that combines the optimality of \eqref{eq:distribution2} and \eqref{eq:eta} with the verifiability of \eqref{eq:distribution1} and \eqref{eq:zeta}. Such a design is beyond present scope, however.

\section{Optimizing the tradeoff}\label{sec:optimization}

We first formulate the general tradeoff optimization problem under capacity constraints before elaborating a distributed solution  based on Lagrangian relaxation for the particular network instance of Fig. \ref{fig:network}. Numerical results demonstrate how the proposed method works on a toy network example. 
\subsection{General problem formulation}\label{sec:general}
We assume content placement is optimized on a daily basis. CPs have detailed demand forecasts while ANOs are aware of capacity limits. Externally fixed costs are known to both CPs and ANOs. Optimal placements are realized at the start of the day in an off-peak traffic period.  Charges and subsidies are calculated at the end of the day based on  allocated cache capacities,  measured peak period demand and  respective unit resource prices.

We use the notation introduced in Sec. \ref{sec:ufl} with the addition of storage and bandwidth capacity limits for each node $n \in \mathcal{N}$, denoted $S_n$ and $B_n$, respectively. We effectively assimilate the path from $n$ to its parent to a link of capacity $B_n$. CPs,   $k \in \mathcal{K}$, and ANOs, $a \in \mathcal{A}$, are implicitly identified by disjoint subsets of contents, $f \in \mathcal{F}_{k}$, and leaf nodes, $l \in \mathcal{L}_{a}$, respectively. All intermediate nodes, $\mathcal{I} = \mathcal{N} \setminus \{\mathcal{L},0\}$, also belong exclusively to one ANO. Let $\mathcal{I}_{a}$ be the set of intermediate nodes belonging to ANO $a$ and denote all nodes belonging to $a$ by $\mathcal{N}_a = \mathcal{L}_a \cup \mathcal{I}_a$.  

We also need notation for the cost of a path from each node $n$ to the source: $b_{n+} = \sum_{n \preceq m \preceq 0} b_m$ where $n \preceq m \preceq 0$ covers the set of links on the path from $n$ to the source. Costs are interpreted here as usage based {\em{prices}}. Each ANO can fix its own prices but, for the sake of neutrality, these are applied equally to all CPs.

Content placement is defined by indicator variables: $x_n^f=1$ if content $f$ is stored in node $n$ and $x_n^f=0$ otherwise. We additionally define delivery variables: $y_{ln}^f=1$ if requests from leaf $l$ for content $f$ are delivered from node $n$ and  $y_{ln}^f=0$ otherwise. In the optimal answer, we want to have $y_{ln}^f=1$ if $n$ is the first node such that $x_n^f=1$ on the path from $l$ to the source.  The utility of a placement is equal to the overall savings relative to a network without caches. We seek to solve the following problem. 
\ifCLASSOPTIONonecolumn
\begin{flalign}
\textbf{(ILP)} \hspace{0.25 \columnwidth} \textrm{maximize } &    \,U({x}, {y})  = \sum_{f \in \mathcal{F}}  U({x^f}, {y^f}), \textrm{   where      } &\notag \\
 U({x^f}, {y^f}) &=  \sum_{l \in \mathcal L} \sum_{l \preceq n \preceq 0} { \lambda_l^f y_{ln}^f} b_{n+}  \,-\,  \sum_{n \in {\mathcal N}} x_n^f  s_{n} , & \label{eq:utility} 
 \end{flalign}
\else
\begin{flalign}
\textbf{(ILP)} \hspace{0.05 \columnwidth} \textrm{maximize } &    \,U({x}, {y})  = \sum_{f \in \mathcal{F}}  U({x^f}, {y^f}), \textrm{   where      } &\notag \\
 U({x^f}, {y^f}) &=  \sum_{l \in \mathcal L} \sum_{l \preceq n \preceq 0} { \lambda_l^f y_{ln}^f} b_{n+}  \,-\,  \sum_{n \in {\mathcal N}} x_n^f  s_{n} ,  \label{eq:utility} &
 \end{flalign}
\fi
subject to placement and delivery constraints for each content $f \in \mathcal F$,
\begin{align}
 \sum_{n \in \mathcal N} y_{ln}^f \leq 1,&\,\,\forall l,\label{eq:UFL:ILP:rout}\\
 y_{ln}^f \leq x_{n}^f,\,\forall l,&\,\,\forall n, \label{eq:UFL:ILP:place}\\
x_{n}^f \in \{0, 1\},&\, \,\forall n,  \label{eq:UFL:ILP:x} \\
y_{ln}^f \in \{0, 1\},& \,\,\forall l,n, \label{eq:UFL:ILP:y}
\end{align}
and capacity constraints,
\begin{align}
& \sum_f x_n^f \le S_n,\,\,\forall n, \label{eq:storagecap}\\
& \sum_f \lambda^f_l (1- \sum_{l \preceq m \preceq n} y^f_{lm}) \le B_n,\,\,\forall n. \label{eq:bandwidthcap}
\end{align}
This integer linear program is a generalization of the NP-hard problem of optimal content placement in cache hierarchies \cite{poularakis2016complexity} (where only storage constraints \eqref{eq:storagecap} would be imposed).
Before presenting a distributed solution based on Lagrangian relaxation, we discuss the constraints that apply in our particular instance. 
\subsection{A specific instance}
We consider the three-tier network of Fig. \ref{fig:network}. The root node 0 is the CO and, as assumed in Sec. \ref{sec:shareCO}, storage and bandwidth are unlimited with externally fixed prices $s_0$ and $b_0$, respectively. Intermediate nodes, $i \in \mathcal{I}$, are also assumed able to elastically provide as much storage as needed at fixed unit price $s_i$. On the other hand, we suppose storage in leaf nodes is limited to $S_l < \infty$, for $l \in \mathcal{L}$, typically for technological reasons.

Bandwidth in the access network is supposed to be provisioned based on demand forecasts with a lead time of several months. This means bandwidth is limited but typically more than might be optimal at any considered instant within the planning cycle since demand has an increasing trend. We assume network planners are competent and ensure that no bandwidth constraint is violated when available cache space is optimally used, \ie, the problem does have a feasible solution. 

For capacity limited resources, the ANOs fix prices that induce CPs to optimally share available capacity. One objective would be, for instance, to fully use the limited cache in a wireless base station. Similarly, the bandwidth price on the link from an intermediate node $i$ to the CO can be set to ensure maximum utilization and thus minimize the cost of storage at that node. Storage and bandwidth are effectively \emph{substitutable} resources: the relative elasticity of storage provision at the intermediate nodes enables more efficient use of provisioned bandwidth. 

\subsection{A distributed solution}
The complexity of\textbf{ (ILP)}  and the separate content placement and price setting roles of the CPs and ANOs, respectively, impose a distributed solution. This is possible through dual Lagrangian decomposition by relaxing constraints \eqref{eq:storagecap}  and \eqref{eq:bandwidthcap} and applying the sub-gradient method to derive an iterative procedure. The relaxed problem is then a set of UFL instances as introduced in Sec. \ref{sec:ufl}  and can be solved independently by the CPs. 

First introduce notation for the cache size used by CP $k$ at node $n$,
\begin{equation}
C_{nk} =  \sum_{f \in \mathcal F_{k}} x_n^{f}, \label{eq:cache} 
\end{equation}
and the residual amount of CP $k$ demand routed over link $n$,
\begin{equation}
\Lambda^k_{n} = \sum_{l \in \mathcal{L}} \sum_{f \in \mathcal F_{k}} \lambda^f_{l} (1- \sum_{l \preceq m \preceq n} y_{lm}^{f}). \label{eq:Lambda}
\end{equation}
For link $0$, from CO to the source, we need the per-ANO partition of $\Lambda_{0}^k$. Let
\begin{equation}
\Lambda_{0a}^k = \sum_{l \in \mathcal{L}_a} \sum_{f \in \mathcal F_{k}} \lambda^f_{l} (1- \sum_{l \preceq m \preceq 0} y_{lm}^{f}). \label{eq:Lambda0a}
\end{equation}
Finally, let $\zeta_{ak}$ be the ANO $a$ share of the CO cache cost of CP $k$, computed from \eqref{eq:distribution2} and \eqref{eq:eta},
\begin{equation}
\zeta_{ak} = \frac{1}{C_{0k}} \sum_{f \in \mathcal{C}_{0k}} \frac{\sum_{l \in \mathcal{L}_a} \lambda_l^f y_{l0}^f } { \sum_{l \in \mathcal{L}} \lambda_l^f y_{l0}^f }. \label{eq:zetaak}
\end{equation}

For our specific network instance, utility can then be expressed as,
\begin{align}\label{eq:utility:total}
U(x,y) &= \sum_{a \in \mathcal{A}} U_{a}(x,y),
\end{align}
where 
\ifCLASSOPTIONonecolumn 
\begin{equation} \label{eq:utilitya}
U_{a}(x,y)  =   \sum_{k \in \mathcal{K}} \Big((\sum_{l \in \mathcal L_a} T^k_{l} -  \Lambda^k_{0a}) b_0 - \zeta_{ak} C_{0k} s_0 - \sum_{i \in \mathcal{I}_a }  C_{ik} s_i \Big),
\end{equation}
\else
\begin{multline} \label{eq:utilitya}
U_{a}(x,y)  =   \sum_{k \in \mathcal{K}} \Big((\sum_{l \in \mathcal L_a} T^k_{l} -  \Lambda^k_{0a}) b_0\, -\\
- \zeta_{ak} C_{0k} s_0 - \sum_{i \in \mathcal{I}_a }  C_{ik} s_i \Big),
\end{multline}
\fi
and $T_{l}^k= \sum_{f \in \mathcal{F}_k} \lambda_l^f$ is the demand for CP $k$ at leaf $l$. The only costs appearing in \eqref{eq:utility:total} are $b_0$, $s_0$ and $s_i$ for $i \in \mathcal{I}$. Capacity limited resources are considered as sunk costs and the corresponding $b_n$ and $s_n$ are set to zero. Their utilization is controlled by shadow prices in the form of Lagrange multipliers.

Introduce the Lagrangian multipliers $\sigma_l \ge 0$ and $\beta_n \ge 0$ for the relaxed constraints \eqref{eq:storagecap} and \eqref{eq:bandwidthcap}, respectively, and define the Lagrangian,
$$L(x,y,{\beta},{\sigma}) =  \sum_{a \in \mathcal{A}} L_a(x,y,{\beta},{\sigma}), $$
where
\ifCLASSOPTIONonecolumn
\begin{equation}\label{eq:lagrangea}
L_a(x,y,{\beta},{\sigma}) =  U_a(x,y) -   \sum_{l \in \mathcal{L}_a} \sigma_l \Big(\sum_{k \in \mathcal{K}} C_{lk} - S_l \Big) - \sum_{n \in \mathcal{N}_a} \beta_n \Big( \sum_{k \in \mathcal{K}} \Lambda^k_{n} - B_n \Big) ,
\end{equation}
\else
\begin{multline}\label{eq:lagrangea}
L_a(x,y,{\beta},{\sigma}) =  U_a(x,y) -   \sum_{l \in \mathcal{L}_a} \sigma_l \Big(\sum_{k \in \mathcal{K}} C_{lk} - S_l \Big) \,- \\
 - \sum_{n \in \mathcal{N}_a} \beta_n \Big( \sum_{k \in \mathcal{K}} \Lambda^k_{n} - B_n \Big) ,
\end{multline}
\fi
and $U_a(x,y)$ is given by \eqref{eq:utilitya}. The dual of (\textbf{ILP}) can then be written,
\ifCLASSOPTIONonecolumn
\begin{flalign}\label{eq:opt:dual}
\textbf{(DP)}  \hspace{0.3\columnwidth} & \min_{\beta \ge 0,\, \sigma \ge 0}  \max_{x,\,y}\, L(x,y,{\beta},{\sigma}),& 
\end{flalign}
\else
\begin{flalign}\label{eq:opt:dual}
\textbf{(DP)}  \hspace{0.2\columnwidth} & \min_{\beta \ge 0,\, \sigma \ge 0}  \max_{x,y}\, L(x,y,{\beta},{\sigma}),& 
\end{flalign}
\fi
where $x$ and $y$ are subject to constraints \eqref{eq:UFL:ILP:rout} to \eqref{eq:UFL:ILP:y}. We propose a distributed solution for (\textbf{DP}) where each CP places content to maximize $L$ for given values of ${\beta}$ and ${\sigma}$ while the ANOs iteratively adjust these values using the sub-gradient method \cite{fisher1981lagrangian}.
More precisely, we suppose the CPs and ANOs perform the distributed optimization at the start of each day, exchanging information in  rounds of primal and dual update cycles. A third party `orchestrator' (an algorithm executed in the CO, say)  is additionally informed of iteration outcomes to determine step sizes and stopping conditions. The respective operations are summarized in Algorithm \ref{alg:opt} and detailed below.

\begin{algorithm}[t]
\centering
\caption{Online algorithm to optimize utility \eqref{eq:utility:total}. O denotes the orchestrator.} \label{alg:opt}.   
\begin{algorithmic}[1]
    \Ensure  Optimal  $x^f_n$ and $y^f_{ln}$, \,$\forall f, n, l$
    \Require {  \quad \newline O: sets $\tau=1$, $\textsf{LB}=0$, $\textsf{UB}=\sum_{k} \sum_{l} T_l^k b_0$, $\epsilon$ and $\gamma$. \newline ANO $a$: chooses initial $\sigma_l(0) \ge 0$, $\beta_n(0) \ge 0$.}

    \While {$\textsf{UB} - \textsf{LB} > \epsilon$ and $\tau < \tau_{\text{max}}$ }
    \State CP $k$: calculates optimal $x^f(\tau)$ and $y^f(\tau), \forall f \in \mathcal F_k$ 
   \State \qquad \,\, communicates the results \eqref{eq:cache}, \eqref{eq:Lambda}, \eqref{eq:Lambda0a} 
    \NoNumber{\qquad \,\,  and \eqref{eq:zetaak} to each ANO $a$ and to O.}
     \State O: $\textsf{LB} \leftarrow \max \{\textsf{LB}, U(x(\tau), y(\tau))\}$.
    \State \quad\, $\textsf{UB} \leftarrow  \min \{\textsf{UB}, L(x(\tau), y(\tau), \beta(\tau), \sigma(\tau))\}$.
     \State \quad\, updates step size using \eqref{eq:step} and sends to ANOs. 
    \State ANO $a$: updates shadow prices using \eqref{eq:opt:dual:iteration}.
    \State $\tau = \tau+1$
    \EndWhile
  \end{algorithmic}
\end{algorithm}

\subsubsection{Primal updates}
The CPs have private demand estimates $\lambda_l^f$, know fixed prices $b$ and $s$ and receive proposed shadow prices $\beta$ and $\sigma$ from the respective ANOs. Using these data, CP $k \in \mathcal{K}$ computes the optimal placement of each content $f \in \mathcal{F}_k$ by solving the UFL problem of Sec. \ref{sec:ufl}. 

After performing its part of iteration $\tau$, $\tau \ge1$, each CP $k$, communicates the UFL results to each ANO $a$ and the orchestrator in the form of cache capacities $C_{nk}(\tau)$ and residual traffic demands $\Lambda_{n}^k(\tau)$ for $n\in \mathcal{N}_a$, CO cache capacity $C_{0k}(\tau)$ and ANO cost share $\zeta_{ak}(\tau)$, and residual transit traffic $\Lambda_{0a}^k(\tau)$. 

\subsubsection{Orchestration}
The orchestrator is aware of capacity constraints and  verifies the feasibility of the CP results with regard to the primal problem \textbf{(ILP)} . If feasible, these constitute a possible solution for the (\textbf{ILP}). The objective functions $L$ and $U$ are evaluated and compared to the current upper and lower bounds, $\textsf{UB}$ and $\textsf{LB}$, respectively. As the bounds are updated, the CPs and ANOs must be informed to retain their corresponding decision variables. 

The orchestrator stops iterations if the found solution has an absolute error of less than $\epsilon $ or a limit number of iterations has been attained. In either case the ANOs and CPs are informed and can implement the currently best feasible $\textsf{LB}$ solution. If not, the orchestrator calculates a step size $\delta(\tau)$ for the next iteration using the Polyak formula \cite{fisher1981lagrangian} and sends it to all ANOs,
 \begin{equation}\label{eq:step}
 \delta(\tau) = \gamma \frac{ \left|L(x(\tau),y(\tau),{\beta(\tau)},{\sigma(\tau)})- \textsf{LB} \right|}{||\nabla(\tau)||^2},
  \end{equation}
 where $\gamma \ge 0$ is a scale factor and $||\nabla(\tau)||^2 = \sum_{n \in \mathcal{N}} \nabla_n^B(\tau)^2 + \sum_{l \in \mathcal{L}}  \nabla_{l}^S{(\tau)}^2$.$ \nabla_{n}^B{(\tau)}$ and $\nabla_{n}^S{(\tau)}$ are the sub-gradients at iteration $\tau$, 
 \begin{align}
 \nabla_{n}^B{(\tau)} = \sum_{k \in \mathcal{K}} \Lambda^k_{n}{(\tau)} - B_n,\, &\textrm{ for }n \in \mathcal{N}\setminus 0, \\
 \nabla_{l}^S{(\tau)} = \sum_{k \in \mathcal{K}} C_{lk}(\tau) - S_l,\,& \textrm{ for }l \in \mathcal{L}.
 \end{align}
\subsubsection{Dual updates}
Each ANO $a$ adjusts the shadow prices $\beta_n$ and $\sigma_n$ for capacity limited resources $n \in \mathcal{N}_a$ to more closely match the capacity constraints. Specifically, ANO $a$ computes the sub-gradients  $\nabla_{n}^B{(\tau)}$ for $n \in \mathcal{N}_a$ and  $\nabla_{l}^S{(\tau)}$ for $l \in \mathcal{L}_a$. Using step size \eqref{eq:step} communicated by the orchestrator, it computes 
\begin{align}\label{eq:opt:dual:iteration}
\beta_n{(\tau+1)}  = \left[ \beta_n{(\tau)} + \delta(\tau) \nabla_{n}^B{(\tau)} \right]^+, \,&\textrm{ for } n \in \mathcal{N}_a,\notag \\
\sigma_l{(\tau+1)}  = \left[ \sigma_l{(\tau)} + \delta(\tau) \nabla_{l}^S{(\tau)} \right]^+,\,&\textrm{ for } l \in \mathcal{L}_a, 
\end{align} 
where $[x]^+$ denotes the maximum of $x$ and 0, and sends these new values to  the CPs and the orchestrator.

Convergence to the optimal dual solution (\textbf{DP}) using the Polyak step size rule has been empirically demonstrated \cite{fisher1981lagrangian} but may take many iterations for large problems. Any feasible solution after a certain number of iterations is likely to be satisfactory, however, since capacity constraints are not usually tight (e.g., bandwidth is also used for other traffic) and slight under-utilization does not have serious consequences. 

Due to the integer nature of decision variables, it is sometimes non-trivial to find feasible solutions. 
However, it often happens that the primal solution will be nearly feasible. In these cases, the orchestrator can project the solution to a fair, sub-optimal feasible solution for any violated constraint and require the CP to update the other related cache size and traffic demand variables. Alternative more sophisticated projection heuristics, typically requiring a greater degree of cooperation between orchestrator and CPs, might also be applied  \cite{fisher1981lagrangian}.   

\subsection{Settlements}
The computed cache capacities $C_{nk}$ are reserved for the day. Demands $T^k_{l} $ and ${\Lambda}^k_{n}$ are only estimates, however, and ANO charges and CP subsidies are based on measured busy period demand denoted $\tilde{T}^k_{l}$ and $\tilde{\Lambda}^k_{n}$. The ANOs pay for storage in intermediate nodes $i \in \mathcal{I}_a$ and the CO, and for transit bandwidth at given rates $s_i$, $s_0$ and $b_0$, respectively. They also pay a subsidy to the CPs equal to a fraction of their cost savings calculated using these fixed rates, the shadow prices and the measured traffic. ANO $a$ pays CP $k$ the following subsidy,
\begin{multline}\label{eq:sub:ak}
 \textsf{sub}_{ak} = r_{ak}  \times \Big(\sum_{l \in \mathcal{L}_a} \tilde T^k_{l} (\beta_{l}+\beta_{p(l)}+b_0 )   -  \sum_{n \in \mathcal{N}_a} (\tilde{\Lambda}^k_{n} \beta_{n})\, - \\
  - \tilde{\Lambda}_{0a}^k b_0  -  \sum_{l \in \mathcal{L}_a}( C_{lk} \sigma_l) - \sum_{i \in \mathcal{I}_a} (C_{ik} s_i) -  \zeta_{ak} C_{0k} s_0   \Big),$$
\end{multline}
where $p(l) \in \mathcal I_a$ is the parent of leaf $l$ and $r_{ak}$ is the proportionate share of overall savings negotiated between ANO $a$ and CP $k$ (see Sec. \ref{sec:shareCO}).

\subsection{Numerical application}\label{sec:optimization:evaluation}
  \begin{figure*}[t]
         \centering
         \captionsetup{justification=centering}
         \subfloat[][Shadow price $\beta_i$ vs. capacity $B_i$, \\ $S_l = 200$ GB, $B_l =$ 15 Mb/s.]{
         \includegraphics[width=0.32\textwidth]{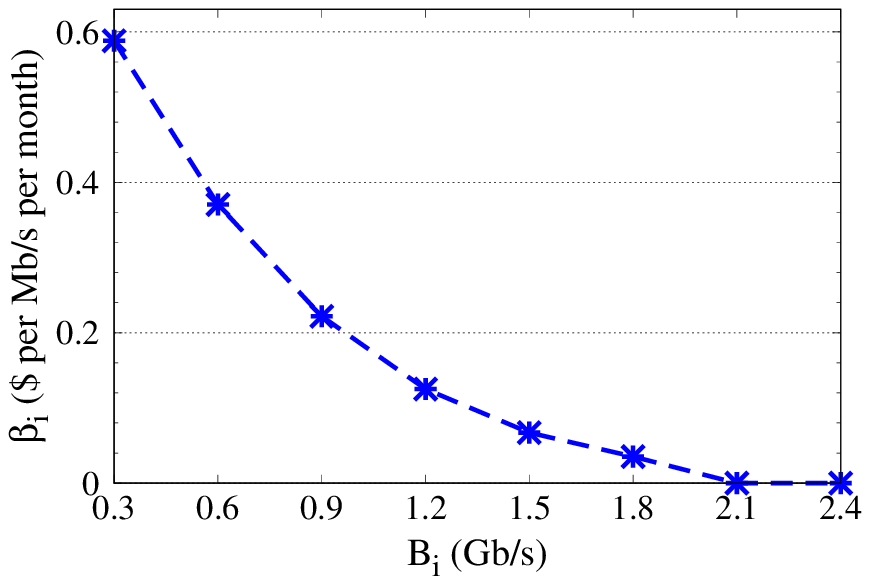}
         \label{fig:bi:price}}
          \subfloat[][Utility vs. capacity $B_i$, \\ $S_l = 200$ GB, $B_l =$ 15 Mb/s.]{
          \includegraphics[width=0.32\textwidth]{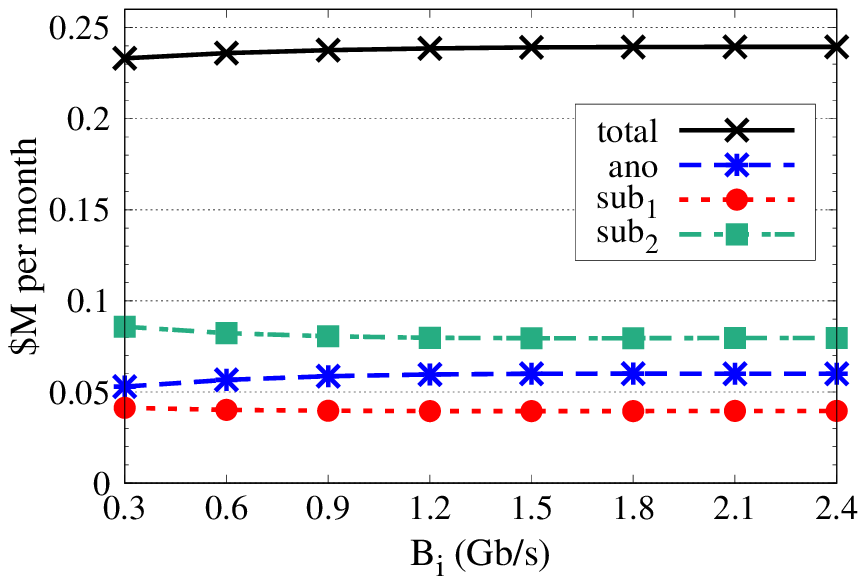}
         \label{fig:bi:gain}}
           \subfloat[][Utility vs. capacity $S_l$, \\ $B_l =$ 15 Mb/s, $B_i= 600$ Mb/s.]{
          \includegraphics[width=0.32\textwidth]{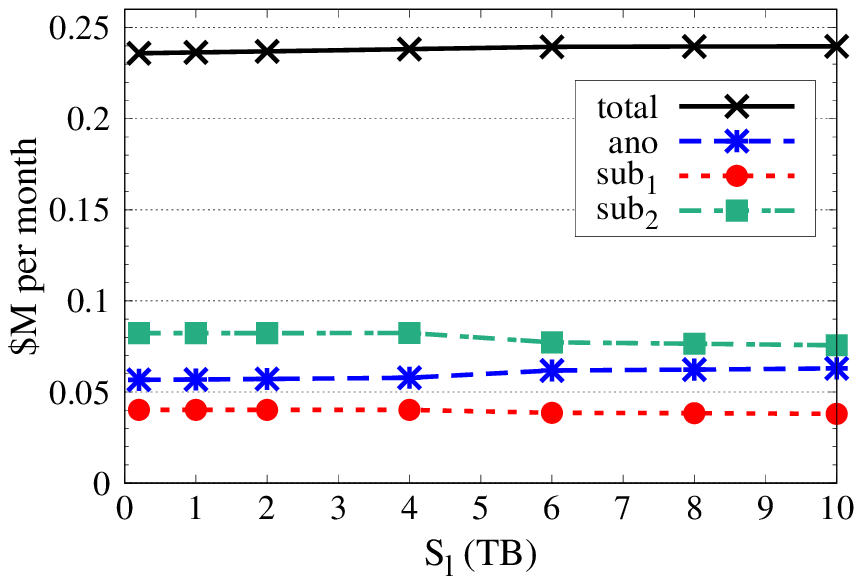}
         \label{fig:sl:gain}} 
        \caption{Algorithm \ref{alg:opt} applied to a symmetric network with $2$ ANOs and $2$ CPs;\\ $e_1=100, e_2=10, b_0=\$4$ per Mb/s per month, $s_0=s_i=\$0.03$ per GB per month.}
        \label{fig:eval1}
    \end{figure*}
We consider a toy network example to illustrate an application of the method. 
As in Sec. \ref{sec:quantify}, to simplify the presentation of results, we consider a symmetric 3-tier access network where all leaves have the same demand and all links and nodes at the same level have the same capacity. Two ANOs have dedicated leaf and intermediate node infrastructure and access the Internet via a common CO hub. Both ANOs have identical demand and network capacities. The fanouts of each ANO are $e_1=100$ and $e_2=10$, as in the example of Fig. \ref{fig:comp1}. 

ANO users access the content of two CPs both of which have a distinct content catalogue of $10^7$ one megabyte files with a Zipf(0.8) popularity distribution. Peak demand for CP $1$ content from the users of each ANO is $10$ Gb/s  while demand for CP $2$ is twice as much.  

The fixed cost of storage at the CO, $s_0$, and at the intermediate nodes, $s_i$, is $\$0.03$ per GB 
while the cost of transit bandwidth $b_0$ is set to $\$4$ per Mb/s.
These are monthly rates as in Sec. \ref{sec:quantify}. The other resources are provisioned by the ANOs and are considered as sunk costs. 
We derive optimal utilities, using Algorithm \ref{alg:opt}, as functions of the capacities of the other resources, namely leaf storage $S_l$, leaf bandwidth $B_l$ and intermediate link bandwidth $B_i$. 
The negotiated fractional shares $r_{ak}$ of ANO savings used to subsidize the CPs are arbitrarily set to $0.5$ in all cases.

Fig. \ref{fig:bi:price} shows how the intermediate link shadow price $\beta_i$ varies as a function of capacity $B_i$ while leaf storage and bandwidth are fixed. 
The figure shows that  $\beta_i$ decreases as available capacity increases, going to zero when link capacity ceases to be constraining. 
Shadow prices correspond as usual to marginal utilities and reflect the quality of the ANO dimensioning. They would coincide to real costs if the network were optimally sized. 

Fig. \ref{fig:bi:gain} plots realized gains as a function of $B_i$ for fixed $S_l$ and $B_l$ and Fig. \ref{fig:sl:gain} plots the same values as a function of $S_l$ for fixed $B_l$ and $B_i$. The figures show total utility \eqref{eq:utility:total}, the subsidies received by each CP from \eqref{eq:sub:ak}, and the remaining ANO utility after deduction of the subsidy (identical for both ANOs). Total utility increases slightly as capacity grows but is not highly sensitive due to the low cost of storage, especially at the shared CO. Higher capacity leads to smaller shadow prices and a greater share of the gains for the ANOs. The shares do not reflect the 0.5 ratio $r_{ak}$ when capacities are low since shadow prices then contribute significantly to subsidies but not to overall utility. CP $2$ has higher demand than CP $1$ and therefore yields greater cost savings. This is why its subsidy is higher. 

\section{Related Work}
The timely survey  by Paschos \etal on the role of caching in networks usefully highlights the significant infrastructure economies realizable by trading off cache memory for access network bandwidth \cite{paschos2018role}. Cited works by Borst \etal \cite{borst2010distributed} and Poularakis \etal \cite{poularakis2016complexity} on optimal content placement in cache hierarchies are particularly relevant. However, this prior work does not explain how the optimal tradeoff can be realized when CPs, who have exclusive knowledge of demand, are hardly motivated to use this to reduce the costs incurred by the network operator.

The business relation between ISPs and CPs is well-known to be problematic and has given rise to much research and discussion, notably on the issue of network neutrality. The 2009 analysis of the content delivery two-sided market by Musacchio \etal \cite{musacchio2009} is still relevant today while forthcoming network ``cloudification'' does not appear to bring obvious simplifications (\eg, Tang and Ma \cite{tang2019regulating} and Hu \etal \cite{hu2019media}). In our work we suppose content placement takes place under the presently dominant, one-sided pricing model where end-user charges pay for access network costs. CPs therefore have little natural incentive to cooperate in minimizing these costs.

There is relatively little work that seeks to estimate the quantitative value of the memory bandwidth tradeoff. Kelly and Reeves \cite{kelly2001optimal} and Cidon \etal \cite{cidon2002optimal} made early contributions brought up to date by Erman \etal \cite{erman2011cache} and Roberts and Sbihi \cite{roberts2013exploring}. The present work extends the analysis of Elayoubi and Roberts \cite{elayoubi2015performance} for an isolated cache by considering a hierarchy of cache locations. More significantly, we additionally design pricing and value sharing mechanisms that incite network operators and CPs  to cooperatively realize the optimal tradeoff.

Game theory has been widely used to analyse the economic relation between ISPs and CPs. Much of this literature makes the assumption that CPs have an incentive to cache contents close to end-users in the form of improved QoE, either through reduced latency or enhanced throughput. This is the case of recent work by Gourdin \etal \cite{gourdin2017economics}, Mitra \etal \cite{mitra2017emerging}, Mitra and  Sridhar \cite{mitra2019case} and Douros \etal  \cite{douros2017caching}, for example. We disagree that QoE is a sufficiently discriminating criterion since latency in the access network is hardly impacted by negligible differences in propagation time while throughput and storage access times can and should be controlled by adequate provisioning. 
We deduce the need to introduce subsidies, in the form of side-payments from network operators to CPs, in order to realize the significant savings brought by optimizing the tradeoff. Game theory is used to determine how the savings should be shared between CPs and network operators.

When storage capacity or network bandwidth is limited, it is necessary to partition the resource between multiple CPs. Optimal partitioning  of a limited capacity cache was considered by Dehghan \etal \cite{dehghan2019sharing} and Araldo \etal \cite{araldo2018caching}. Both papers propose iterative schemes where partitions are adjusted by the network operator based on observed per-CP performance. In our network model, optimal cache partitioning is realized by iteratively adjusting the price of storage relative to the price of bandwidth. The CP subsidy is such that they have an incentive to store the overall most popular contents, up to the capacity limit. Dynamic pricing for bandwidth sharing is the basis of network utility maximization (NUM) as introduced by Kelly \etal \cite{kelly1998rate} and applied, in particular, to ISP--CP interaction  by  Hande \etal \cite{hande2009network}. NUM is realized through a Lagrangian decomposition of the optimization into interacting user and network problems. In our proposal the tradeoff optimization is split between CPs placing content to optimize their subsidy while network operators fix prices to maximize utilization of their limited capacity pre-installed resources.

\section{Conclusion}

We have identified the need for network operators to financially reward CPs for placing content in access network caches thus realizing an advantageous memory for bandwidth tradeoff. The considerable potential savings from an optimal tradeoff are currently not realized since the CPs, who exclusively possess the necessary detailed demand data, have no natural incentive to use this to make the optimal placement.  In our proposal, each ANO would give a subsidy to each CP that is proportional to the realized savings. The actual proportion, between 0 and 100\%, would be determined bilaterally.

We designed a value sharing scheme where CPs maximize their subsidy by optimally placing content items in an access network cache hierarchy rooted at the central office. We determined a value distribution that specifies how the cost of the CO cache and corresponding CP subsidy should be divided among  multiple ANOs. This distribution is independent of the individually negotiated proportions of savings handed over to the CP. 

We proposed a distributed iterative approach to optimize the tradeoff based on Lagrangian decomposition. CPs compute their optimal content placements in the cache hierarchy, given unit bandwidth and storage prices, while ANOs fix shadow prices to maximize utilization of limited capacity resources. Neutrality is assured since prices are applied uniformly and content items are placed optimally independently of the CP to which they belong.

The present cache subsidy proposal is novel and considerable scope to extend this preliminary analysis remains. Our simple network model excludes certain aspects of existing and future access networks that impact the tradeoff. For instance, overlapping base station coverage areas or a topology with cross links would allow cooperative caching \cite{krolikowski2018decomposition}. Infrastructure sharing between ANOs in 5G networks would imply further cost and subsidy partitions \cite{andrews2014}.

Our analysis and algorithm design can certainly be improved in several directions. We have not thoroughly evaluated the convergence speed and optimality gap of the proposed distributed optimisation. The subsidy has been designed so that CPs maximize their gain by  making optimal placements but it remains to fully evaluate scope for gaming the system by either ANOs or CPs.  Fallback actions should be defined when some ANO or CP does not participate or behaves irrationally.

Last but not least, a practical cache subsidy scheme needs to be acceptable to ANOs and CPs. Network operators already complain that CPs do not pay them sufficiently and will not at all like the idea of payments going in the opposite direction. CPs, on the other hand, are quite happy with their lucrative business models and may not see any pressing need to change. However, optimizing the network infrastructure is a worthwhile societal goal leading ultimately to lower charges to end users and greater efficiency for both network operators and CPs. Cache subsidies are a viable means to achieve this goal. 

%


\ifCLASSOPTIONcompsoc
  \section*{Acknowledgments}
\else
  \section*{Acknowledgment}
\fi
The work presented in this article has  benefited from the support of NewNet@Paris, Cisco's Chair ``{\sc Networks for the Future}'' at Telecom ParisTech (\url{http://newnet.telecom-paristech.fr}). Any opinions, findings or recommendations expressed in this material are those of the authors and do not necessarily reflect the views of partners of the Chair.
\bibliographystyle{IEEEtran}
\bibliography{IEEEabrv,refs}

\end{document}